\def\thanks#1{\protected@xdef\@thanks{\@thanks
        \protect\footnotetext{#1}}}
\newcommand{\possessivecite}[1]{\citeauthor{#1}'s (\citeyear{#1})}
\newcolumntype{F}[1]{>{\raggedright\let\newline\\\arraybackslash\hspace{0pt}}m{#1}}
\newcolumntype{G}[1]{>{\centering\let\newline\\\arraybackslash\hspace{0pt}}m{#1}}
\newcolumntype{H}[1]{>{\raggedleft\let\newline\\\arraybackslash\hspace{0pt}}m{#1}}
\newcolumntype{I}{>{\centering\arraybackslash}m{3cm}}
\newtheorem{proposition}{Proposition}
\begin{document}
\bibliographystyle{aernobold}


\begin{spacing}{0.9}
\begin{titlepage}

\title{On Track for Retirement?}
\runningheads{Matthew Olckers}{On Track for Retirement?}

\author{\large \href{https://www.matthewolckers.com/}{Matthew Olckers}\thanks{The experiment was approved by the Institutional Review Board at the Paris School of Economics (2018 017) and the survey of predictions was approved by the Monash University Human Research Ethics Committee (25207). A pre-analysis plan was registered with \href{https://aspredicted.org}{aspredicted.org}. I am grateful to Milo Bianchi, Amit Dekel, Jonathan Guryan and Birendra Rai for helpful comments.}\\ {\color{MattBlue} UNSW Sydney}
   }

\date{\small \today  }
\maketitle
\thispagestyle{empty}

\vspace{-0.3in}

\begin{abstract}
Over sixty percent of employees at a large South African company contribute the minimum rate of 7.5 percent to a retirement fund---far below the rate of 15 percent recommended by financial advisers. I use a field experiment to investigate whether providing employees with a retirement calculator, which shows projections of retirement income, leads to increases in contributions. The impact is negligible. The lack of response to the calculator suggests many employees may wish to save less than the minimum. I use a model of asymmetric information to explain why the employer sets a binding minimum.
\end{abstract}

%

\end{titlepage}
 \end{spacing}





\thispagestyle{empty}

\begin{quote}
{\it
    The Eighth Wonder of the World---is compound interest.}
\end{quote} \hspace{1.8 in} --- The Equity Savings and Loan Company, 1925

\section{Introduction}

Since as early as 1925, asset managers have extolled the wonder of compound interest.\footnote{The quote ``The Eighth Wonder of the World---is compound interest.'' is often attributed to Albert Einstein but he likely never said those words. The famous statement was used by The Equity Savings and Loan Company for an advertisement in 1925. See \href{https://quoteinvestigator.com/2019/09/09/interest/}{quoteinvestigator.com} for more details.} If you only start saving early enough, you will see your savings double---again and again. Employers and governments share a similar reverence for compound interest. Substantial tax and salary incentives encourage employees to start saving early and, in many countries, the government compels employees to save for retirement.

Employees do not seem to share the same admiration for compound interest. As the responsibility of saving for retirement has shifted from employers to employees in most countries \citep*{choi2015review}, very few employees exceed the minimum or default retirement savings rate \citep*{madrian2001default,thaler2004smart,choi2004default,chetty2014active}. Perhaps, employees do not fully appreciate the gains provided by compound interest?

People struggle to make even the most basic calculations about compound interest and inflation \citep*{lusardi2014jel} and tend to underestimate exponential growth \citep*{stango2009exponential,levy2016exponential,goda2019predict}. If employees only understood how a small increase in saving today could have a massive impact in retirement, increased retirement savings may follow.\footnote{Even though compound interest may be a basic concept, calculating optimal retirement wealth is very challenging---even for economists \citep*{skinner2007savingenough,poterba2015heterogeneity}. In the setting studied in this paper, most employees have projected retirement wealth far below common benchmarks so calculating the optimum may be less beneficial than simply appreciating the magnitude of long-term compound growth.}

In this paper, I use a field experiment at a large South African financial services company to test if making compound interest calculations easier causes increased contributions to a tax-deferred retirement account. The treatment group received a custom-built calculator to estimate income at retirement. Even though employees found the calculator helpful, the calculator had only a zero to marginally positive impact on contribution rates.

The lack of response to the calculator is surprising given that over sixty percent of the employees contribute at the minimum rate of 7.5 percent, which is expected to provide a retirement income equal to about a third of current salary. The retirement calculator suggests employees need to save at a rate of 15 percent (and sometimes more) to prevent a large drop in income at retirement. If employees were already saving at a rate close to 15 percent, we may expect little change in behavior. Since the current saving rates are so low, the lack of response is puzzling.

Beliefs about investment returns, beliefs about mortality, preferences for property, present-biased preferences or some combination of these factors may generate optimal contribution rates below 7.5 percent. Many employees may wish to contribute below the current minimum set by the company. Why would the company set a binding minimum?

I use a model of asymmetric information to explain the company's decision. The company cannot observe the employee's attitude towards retirement saving. The employee may be careful and willing to contribute regardless of the minimum or the employee may be a ``yolo" type, spending today and not worrying about tomorrow.\footnote{The term ``yolo" is derived from an acronym for ``you only live once" and refers to living in the moment and not worrying about the consequences.} The employer may set a binding minimum to prevent the yolo types from retiring without sufficient savings and damaging the reputation of the company. South Africa, like most developing countries, does not provide a state pension. The only state-provided retirement support is the needs-based Old Age Grant, worth approximately $100$ USD per month. Employees in this study earn at least $1\,200$ USD per month so they must rely on private retirement savings to sustain a similar income during retirement. In this context, employees are more likely to seek support from a previous employer than the state if they find themselves with insufficient retirement saving.

This paper contributes to a literature investigating how people respond to projections of retirement income.\footnote{ Providing income projections is not the only type of information intervention thought to impact retirement savings. Many employers, particularly in the United States, encourage retirement saving by matching the retirement contributions of their employees. The more the employee contributes, the more the employer contributes. Several field experiments focus on communicating information about the match as the arrangement provides an instant return on investment. Some find positive impacts on contribution rates \citep*{clark2014match,goldin2017match,perry2019match} while others find no statistically significant impacts \citep*{choi2011match} or even positive and negative effects depending on the wording of the communication \citep*{choi2017match}. In addition to emphasizing matching incentives, a recent field experiment emphasized other benefits, such as longevity and tax advantages, and found small positive effects but did not detect differences in effects depending on the type of information communicated \citep*{clark2019informing}.} A field experiment in the United States added retirement income and balance projections to a brochure for employees of the University of Minnesota and found small increases (0.15 percent of average salary) in retirement saving among the treatment group \citep*{goda2014projections}. The projections in the brochure were not personalized to specific employees but rather showed the marginal benefit of additional savings on projected retirement income.

Personalized information, which uses the employee's current age, account balance and savings rate, may be more useful than generic information. A field experiment in Chile used self-service terminals at government offices to provide either generic or personalized information for participants in the government's defined contribution pension plan \citep*{fuentes2018personalized}. The personalized information caused a 1.5 percent increase in the number of individuals making voluntary contributions to the pension plan and the change was concentrated among participants who overestimated their retirement income.

Besides specific interventions tested with field experiments, researchers have also used natural experiments to understand the impact of retirement income projections. People typically receive information on retirement benefits through emailed or posted statements. The administrative systems used to send out these statements can generate random variation in information provision. In the United States, the Social Security Administration's annual statement, which shows expected social security benefits, caused improvements in knowledge but no changes in behavior \citep*{mastrobuoni2011statement,carter2018statement}. In contrast, the pension statement in Germany, which shows similar information, caused saving to increase by 14 euros on average, an 11 percent increase relative to the sample mean \citep*{dolls2018statement}. A personalized statement of pension benefits in Chile increased the likelihood of voluntary retirement saving by 1.3 percent \citep*{fajnzylber2015statement}.

The natural experiments are consistent with the field experiment results of mostly small positive impacts of communicating retirement income projections. More generally, a growing base of evidence suggests that financial education initiatives improve knowledge but provide only minor impacts on behavior \citep*{fernandes2014meta,miller2015meta,kaiser2017meta}.

One exception to the trend of small impacts from financial education interventions is a field experiment involving rural farmers in China \citep*{song2019compound}. The intervention combined retirement income projections with a lesson on compound interest. The education intervention caused a 40 percent increase in savings relative to the control group---far larger than the impacts in other studies.\footnote{The timing of the intervention might explain part of the difference. \possessivecite{song2019compound} intervention was conducted at the time of enrollment and very few farmers changed their enrollment rate after the initial selection. Financial education interventions generally have larger impacts when timed to coincide with a relevant financial choice \citep*{fernandes2014meta}.}

As emphasized by \citet*{hastings2013review}, policymakers need to know which types of financial education tools are most effective to improve financial outcomes. The main innovation of the retirement calculator is allowing for interaction.\footnote{Although not focused on retirement saving, a recent experiment with recipients of conditional cash transfers in Colombia also used an interactive tool \citep*{attanasio2019tablet}. A non-government organization provided the recipients with a tablet computer containing a financial education program. The tablet includes games and other interactive elements. The researchers tested the impact of providing the tablet on a wide variety of measures, but most were not statistically different from zero after correcting for multiple hypothesis testing. One exception was a sustained increase in self-reported informal savings.} Previous experiments provided information by email, on brochures or computer display terminals. A retirement calculator allows employees to adjust the assumptions and see how different choices affect their retirement income. We may expect larger impacts from an interactive tool given that active rather than passive methods of teaching have been shown to improve learning \citep{freeman2014active}.

Policymakers may value the focus on this approach as a retirement calculator provides several advantages over benefit statements. First, a calculator provides privacy as the provider of the calculator does not need to collect any data about the employee. Second, the employee can personalize the calculation as he or she wishes, and include savings that may not be observed by a single asset manager or by the state. Even though I find no large positive impact on retirement contributions, the advantages of a reliable publicly-provided retirement calculator may warrant the cost of building and maintaining the calculator.

Besides testing a new tool, this paper contributes to the literature by studying the impact of providing retirement income projections in a new and relevant setting. As in many developing countries, the tax base in South Africa is very small and the only government-provided retirement benefit is the Old-Age Grant, which is approximately 10 percent of an entry-level salary at the company in this study. Lack of preparation for retirement will likely have much stronger impacts on welfare in this setting than in developed countries.

Finally, this paper contributes to the literature by suggesting that asymmetric information between the employer and the employees may determine the minimum contribution rate. The employer may force employees to contribute above their optimal contribution rate out of fear that certain employees will retire without sufficient savings. As the employer cannot observe which employees have sufficient savings, the employer may set a high minimum for all employees. In developing countries, such as South Africa, where high-paid employees rely on private retirement savings, such employer policies may be a vital determinant of retirement savings.

\section{Setting and Intervention}

\subsection{Retirement Saving in South Africa}

As in many countries, such as the United States and the United Kingdom, South Africa relies on a private retirement system. The South African government provides a needs-based Old Age Grant of R 1 780 per month (approximately 105 USD) but any additional retirement income must be funded by private saving.

The South African government encourages people to save for retirement using tax-deferred retirement funds. South African tax law allows people to contribute up to 27.5 percent of their pre-tax salary to a retirement fund and no income tax is paid on these contributions.\footnote{The tax-free contributions are capped at R 350 000. Visit \href{http://www.sars.gov.za/ClientSegments/Individuals/Tax-Stages/Pages/Tax-and-Retirement.aspx}{www.sars.gov.za} for more details on the tax treatment of retirement funds in South Africa.} The capital gains and dividends within the retirement fund are also not taxed.

Most employer retirement funds in South Africa use a defined contribution arrangement, as is the growing trend in many countries \citep*{choi2015review}. Retirement income is not guaranteed. The amount the employee saves and the return on the investment determines the amount the employee receives at retirement.\footnote{Despite most employers using a defined contribution arrangement, the largest employer fund---the Government Employee's Pension Fund---uses a defined benefit arrangement. This fund has 1.2 million members and over 400 000 pensioners. Government employees contribute around 20 percent of their salary and receive a guaranteed payment in retirement according to their number of years of employment. See \href{http://www.gepf.gov.za/index.php/about_us/article/who-is-gepf}{www.gepf.gov.za} for more details on the Government Employee's Pension Fund. }

Defined contribution retirement funds range in flexibility. Some employers require all employees to contribute at a fixed rate while others allow employees to choose a contribution rate within a band. Many employers, especially smaller companies, do not offer a retirement fund. Instead, these employees must use a retirement fund provided by a private asset management company.

A retirement saving system can be summarized using the analogy of the ``three-legged stool" of personal savings, government pensions and employer pensions. In South Africa, the leg of government pensions is short---only 17 percent of average earnings relative to 49 percent in OECD countries \citep*{OECD2019pensions}. Also, the South African government does not provide specific in-kind benefits to retirees. Without a large government pension, the saving choices of individuals and employers are of consequence.

In most countries, part of the population cannot rely solely on government pensions to fund retirement. Like many South Africans, these individuals must use employer pensions or personal savings to smooth consumption across retirement.


\subsection{Retirement Saving at the Company}

The company requires employees to contribute at least 7.5 percent of their pre-tax salary towards a defined contribution retirement plan. For an employee who starts working at 25 years of age and aims to retire at 65, financial advisers suggest a retirement contribution rate of 15 percent to allow the employee to draw a retirement income at a similar level to his or her pre-retirement income.

\begin{figure}[ht]
\centering
\includegraphics[width=\textwidth]{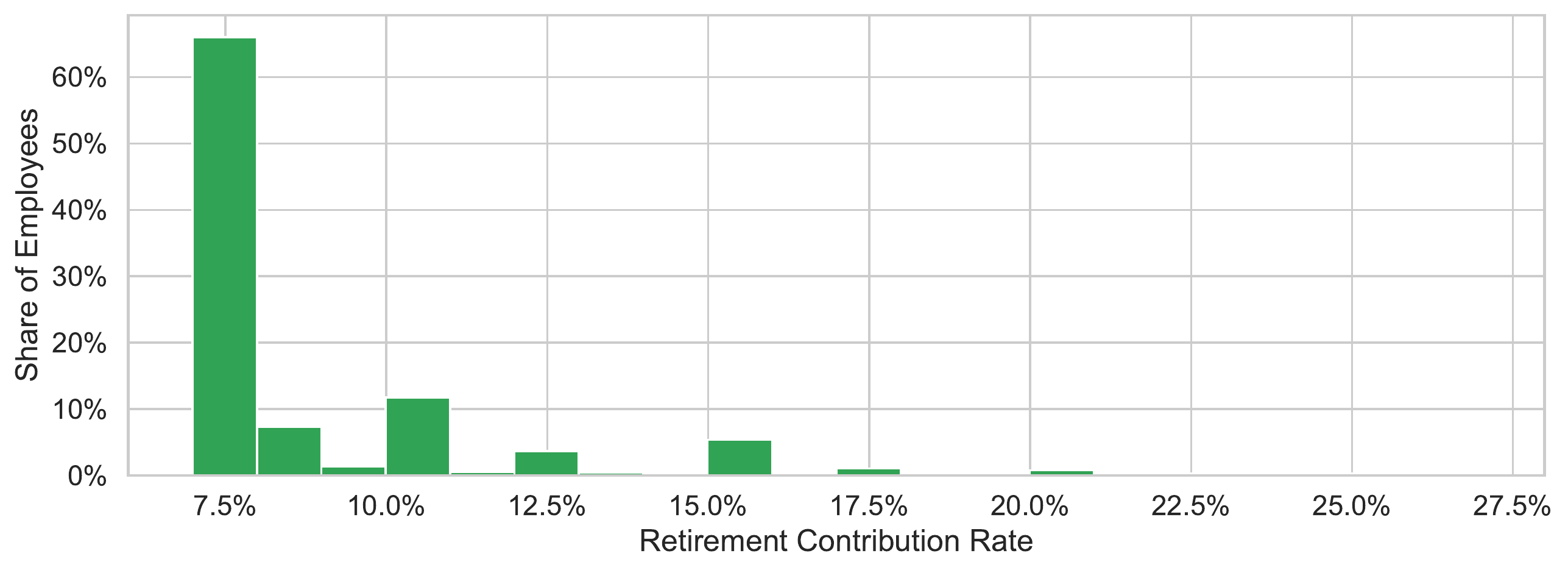}
\caption{Histogram of employee retirement contribution rates in the month before the experiment}
\label{fig:histogram}
\end{figure}

Most employees at the company contribute much less than 15 percent per month towards their retirement plan. Figure \ref{fig:histogram} shows the distribution of contribution rates chosen by employees. Over sixty percent of the employees contribute the minimum rate of 7.5 percent and only ten percent of the employees contribute 15 percent or more.

Even though legislation does not compel the company to set a minimum retirement contribution rate, this is a common practice. I sent out a survey to South African companies about their retirement policies. Of the 27 companies who responded, 52 percent impose a mandatory minimum.\footnote{See Appendix \ref{sec:example_policies} for more details on the responses.} These practices are not limited to South Africa. For example, universities in Australia compel permanent employees to contribute 17 percent of their salary to retirement when the legislation only requires 9.5 percent.

\FloatBarrier
\subsection{Retirement Calculator}

Calculating whether your retirement savings rate is sufficient to meet your retirement goals can be a difficult task. Even if you know all the inputs needed to complete the calculation, the calculation itself can be difficult. You need to use an annuity formula, which may not be obvious for many employees.

With these challenges in mind, the company developed a  retirement calculator to help employees check if they are on track with their retirement savings. The calculator has six inputs: the employee's gender, age, expected retirement age, balance of retirement savings, current salary and monthly contribution rate. Each input field provides a hint on where to find the information. For example, the field requesting the current monthly contribution rate mentions that this percentage can be found on the employee's payslip. Figure \ref{fig:calc} shows the calculator's input fields and the first results screen.

\begin{figure}[ht]
\centering
\begin{subfigure}[b]{0.45\textwidth}
\includegraphics[width=\textwidth]{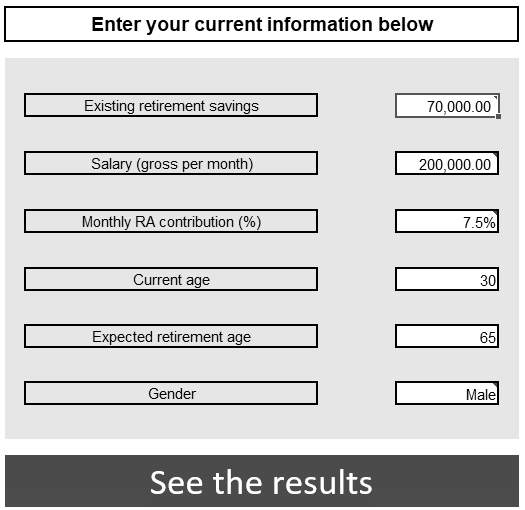}
\caption{Input fields}
\end{subfigure}
\ \
\begin{subfigure}[b]{0.45\textwidth}
\includegraphics[width=\textwidth]{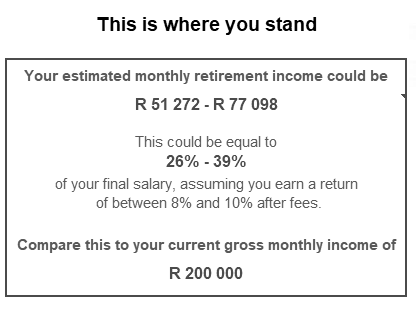}
\caption{Results screen}
\end{subfigure}
\caption{Retirement calculator}
\label{fig:calc}
\end{figure}

The result shows the estimated monthly retirement income based on the employee's inputs. For example, as shown in Figure \ref{fig:calc}, a male employee of 30 years of age with R 70 000 in retirement savings, a 7.5 percent monthly savings contribution and a salary of R 200 000 would expect to have an income of between R 51 000 and R 77 000 at retirement, which is equal to between 26 percent and 39 percent of the employee's salary.

After the first results screen, the employee is prompted to observe the impact of increasing his or her monthly contribution rate, adding a lump sum amount or a combination of the two. The results screen concludes with a link to a tax calculator for the employee to calculate his or her net salary after changes in the contribution rate and a link for the employee to process a change in his or her contribution rate.

The calculator uses several assumptions, which the employee can tweak on a separate screen. The defaults assume the employee's salary increases with inflation, the nominal investment return is between 8 and 10 percent and inflation is 5 percent. To calculate retirement income, the calculator uses recommended drawdown rates provided by the Association for Savings and Investment South Africa (ASISA), which range from 2.5 to 8 percent depending on the gender and retirement age of the employee.

\subsection{Email Emphasized a Target Retirement Income}

The calculator was sent to employees via email with the subject  ``Are you on track for 75\%?".\footnote{See appendix \ref{appendix:email} to read the body of the email.} This percentage, often called the replacement rate, refers to a target retirement income relative to the employee's salary. The 75 percent replacement rate is a common rule of thumb for adequate retirement income.\footnote{The majority of respondents to survey conducted in the United States and the Netherlands preferred a replacement of more than 75 percent \citep*{binswanger2012adequate} so 75 percent serves as a useful benchmark. The company focused on the replacement rate to give employees an easy-to-understand measure of expected retirement income.}

In the example in Figure \ref{fig:calc}, the employee reaches a replacement rate of only 39 percent. What contribution rate is needed to reach a replacement rate of 75 percent? For most employees, the contribution rate should be at least 15 percent, and sometimes far greater for older employees who have saved very little at earlier stages of their career.

The required contribution rate can be calculated using a formula derived from an annuity formula.\footnote{The derivation is available in Appendix \ref{appendix:formula}.} Let $p$ be the goal replacement rate, $r$ be the real investment return, $d$ the drawdown rate (the percentage of the employee's retirement saving he withdraws each year during retirement) and $n$ the number of years until retirement. For simplicity, assume the employee has not started saving yet.
\begin{align*}
    \text{Required contribution rate} = \dfrac{pr}{d((1+r)^n-1)}
\end{align*}
If we assume a goal replacement rate of 75 percent and a real investment return of 5 percent, an employee who starts saving at age 25 and plans to retire at 65 must save 15.52 percent of his or her salary for retirement. This common scenario explains why many retirement investment providers and financial advisers suggest a 15 percent contribution rate as a benchmark.\footnote{For example, articles from \href{https://www.fidelity.com/viewpoints/retirement/how-much-money-should-I-save}{Fidelity} in the United States, \href{https://inews.co.uk/inews-lifestyle/money/how-much-pay-pension/}{iNews} in the United Kingdom and \href{https://businesstech.co.za/news/business/241055/heres-how-much-you-should-be-saving-for-retirement/}{BusinessTech} in South Africa all suggested 15 percent as a retirement savings rate. } If the employee only starts saving at age 30, the required contribution rate jumps to 20.75 percent.

To generate a required contribution rate of 7.5 percent, the rate at which most employees contribute, we would need to make adjustments to our assumptions. We could delay retirement from 65 to 79 years of age, increase the investment return from 5 to 8 percent, increase the drawdown rate from 4 to 8 percent, or some combination of these changes. Since the required contribution rate is sensitive to the assumptions, the calculator offers the advantage of allowing the employee to personalize the inputs and assumptions.

\section{Experiment}
This section describes the design of the experiment. The experiment had two treatment arms. In the first arm, the employee received an email about the retirement calculator, and in the second, the employee received the email and a phone call.

Each year in late November the company sends letters to the employees which state the amount of the employee's performance bonus and the employee's increase in salary for the coming year. South Africa has persistent inflation of around 5 percent so salary increases at least match inflation, but can be far larger.

A few days later, the employee receives a form via email to change his or her monthly retirement contribution rate. This form is sent every year to coincide with salary increases. Even though the form is not compulsory, employees are far more likely to change their contribution rate at this time of the year. In the past three years, over 6 percent of employees change their contribution rate in December whereas less than 1 percent of employees change their contribution rate in other months of the year.

Before employees were notified of the amount of their salary increase, the team provided a randomly selected group of employees with the retirement calculator. The retirement calculator was sent to the treated employees by email. The email contained a hyperlink to the calculator.

Since employees receive many emails each day, the email about the calculator may never be read. A second treatment arm was included to mitigate this concern. In addition to the email, a team member phoned every employee included in the second treatment arm to ask for feedback on the calculator. These phone calls ensured the employee would use the calculator. The team member who conducted the phone calls worked through the list in the ``email and phone" group in random order. For the employees who did not answer, he phoned a second time. The timeline of the experiment was as follows:

{\centering
  \begin{tabulary}{\textwidth}{rL}
16 November  & Email about calculator sent to both treatment groups. \\
19-23 November  & Phone calls to the ``email and phone" group. \\
23 November  & Bonus and salary increase letters sent. \\
26-28 November & Employees receive an online form via email to change their monthly contribution rate. \\
  \end{tabulary}}

\vspace{0.75cm}

The company has over 1200 employees and I selected 775 for the experiment. I included employees who worked at the head office, were full-time permanent staff and South African citizens or permanent residents.\footnote{Temporary staff and foreign nationals are not required to enroll in the company's retirement fund.} I excluded senior management, investment analysts and any employees who knew about the experiment.

Since the participants did not know about the experiment, the experiment is a \textit{natural field experiment} in the terminology of \citet{harrison2004field}. Natural field experiments have the advantage of avoiding self-selection into the experiment and changes in behavior from being exposed to the experimental setting. Due to these benefits, a recent guide on experiments in economics placed a strong emphasis on the value of natural field experiments \citep*{czibor2019guide}.

Most of the employees are young. The average age is 33.71 years with nearly 40 percent of the employees under the age of 30. Just over 50 percent of the employees are women and 71 percent of the employees were disadvantaged by the Apartheid government.

The 775 employees included in the experiment were randomly divided into three groups: the email-only treatment group (193 employees), the email and phone call treatment group (194 employees) and the control group (388 employees). To conduct the randomization, I stratified employees into groups according to whether the employee was saving at the minimum rate, gender, age categories (-27, 28-32, 33-38, 38+ years) and whether the employee was disadvantaged by the Apartheid government. Within each stratum, half were assigned to the control group, a quarter to the email-only treatment group and a quarter to the email and phone call treatment group.\footnote{If a stratum contained a number of employees which was not a multiple of four, I pick one, two or three elements randomly (according to the number of remaining employees) from the list [Control, Control, Email Only, Email and Phone Call] without replacement.}

The stratification provides improved statistical power, especially for the analysis of heterogeneous treatment effects. For example, to check for differences in treatment effects by gender, we need a sufficient number of men and women in the treatment groups and the control groups. The stratification also helps to ensure balance between the treatment and control group on key variables.\footnote{I check for balance in Appendix \ref{appendix:balance} and find no statistically significant differences between the treatment and control groups.}

\section{Empirical Approach}

The empirical approach follows a pre-analysis plan registered at \texttt{\href{https://aspredicted.org}{aspredicted.org}}. I note any deviations from the plan as they occur.

I start by estimating intention to treat effects with the following regression equation: \vspace{-5mm}
\begin{align*}
    Contribution_i = \alpha_0 + \alpha_1 Treatment_i + s_g + \epsilon_i
\end{align*}
$Contribution_i$ measures the monthly contribution rate in December, the first month after the experiment, or the change in contribution rate from November to December.\footnote{There are two deviations from the pre-analysis plan related to the outcome $Contribution_i$. First, I did not specify using the change in contribution rate as an outcome in the pre-analysis plan. Second, I planned to use the share of the employee's bonus contributed towards the retirement fund as an outcome measure, but the company only provided an indicator for whether the employee had contributed a share of their bonus or not. Results using this indicator are available upon request.} $Treatment_i$ is a categorical variable for the two treatment arms of received an email only and received an email and a phone call. The control group is the omitted category. The $s_g$ term is a fixed effect for the stratification group.

The intention to treat effect measures the impact of receiving an email (or receiving an email and a phone call) about the retirement calculator but there is no guarantee that the employee used the calculator. Employees receive many emails every day and this email could easily have been overlooked.

The email contained a link to the calculator and I have data on which employees clicked on this link. I use two-stage least squares to estimate the local average treatment effect, which measures the average treatment effect among the employees who complied with the treatment and clicked on the link. The randomized treatment is used as an instrument for clicking the link.

\section{Results}

The empirical results suggest zero to marginally positive impacts of the retirement calculator on contribution rates. I provide an intuitive visualization of the treatment impact before showing the regression tables.

An estimation plot, shown in Figure \ref{fig:estplot_diff}, shows the distribution of the change in the contribution rate for the control group and each of the two treatment groups.\footnote{For more information on estimation plots, see \citet{ho2019estplots}.} Each point represents one employee and employees with identical values of the outcome are spread horizontally, which provides a visualization similar to a histogram. In both treatment groups, most employees made no change in their contribution rate (as we see most points in line with zero). Although the retirement calculator may have provided useful information for the employees, interacting with the calculator did not cause many employees to change their contribution rate.

\begin{figure}[ht]
\centering
\includegraphics[width=\textwidth]{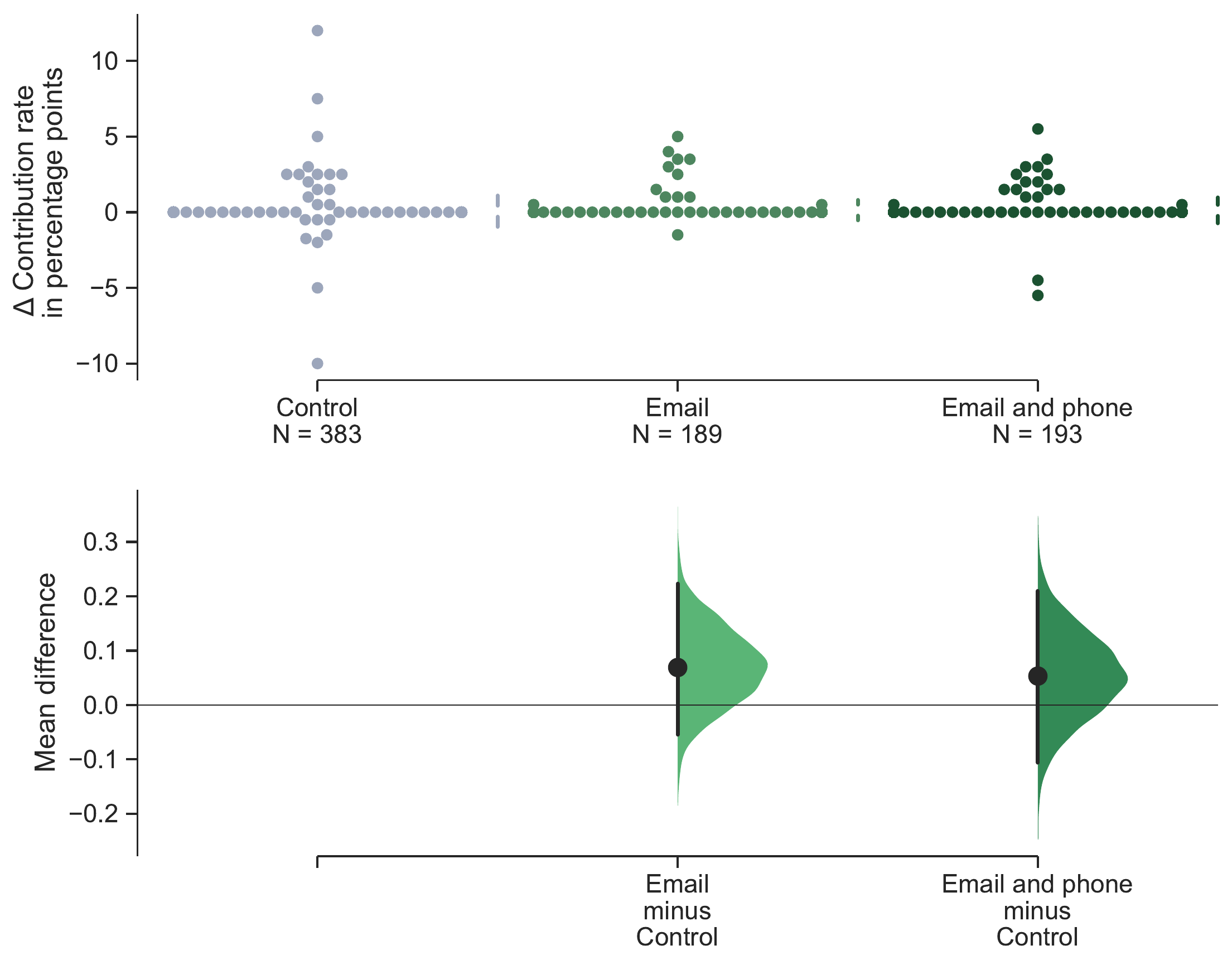}
\caption{Estimation plot for the post experiment contribution rate}
\label{fig:estplot_diff}
\end{figure}

In the bottom panel of Figure \ref{fig:estplot_diff}, the estimation plot shows a non-parametric estimation of the treatment effects. The point estimates show a less than  0.1 percentage point difference in the change in contribution rates relative to the control group. The treatment effect is bounded close to zero. The 95 percent confidence intervals, estimated via bootstrap, ranges from -0.1 to 0.2 percentage points.

I report the regression results in Table \ref{table:impact}. I show results for each of the two outcomes: the contribution rate in the first month after the experiment and the change in contribution rate before and after the experiment. For each outcome, I measure the intention to treat (ITT) and local average treatment effect (LATE).%
\footnote{Notice that there are 765 observations rather than 775. Ten employees left the company in December so we cannot observe their post-experiment contribution rate. Of the ten excluded observations, five were in the control group, four were in the email treatment group and one was in the email and phone treatment group. For historical reasons, another three employees had pre-experiment contribution rates of less than 7.5 percent (the minimum imposed by the company). At the time I selected employees to participate in the experiment, I was only shown contribution rates in categories so these three employees were shown as having a savings rate of 7.5 percent. Since contributing below the minimum is an irregular situation, I have converted the contribution rates for these three employees to 7.5 percent.}

\begin{table}[ht]
\caption{Impact of providing a retirement calculator on contribution rates}
\begin{center}
\scalebox{0.9}{
\begin{threeparttable}
\begin{tabular}{l c c c c}
\toprule
Dependent variable & \multicolumn{2}{c}{Contribution rate} & \multicolumn{2}{c}{Change in contribution rate} \\
\cmidrule(lr){2-3} \cmidrule(lr){4-5}
 & ITT & LATE & ITT & LATE \\
\midrule
Email only      & $0.079$            &                    & $0.070$            &                    \\
                & $ [-0.345; 0.504]$ &                    & $ [-0.073; 0.212]$ &                    \\
Email and phone & $0.198$            &                    & $0.052$            &                    \\
                & $ [-0.245; 0.641]$ &                    & $ [-0.107; 0.212]$ &                    \\
Clicked on link &                    & $0.304$            &                    & $0.086$            \\
                &                    & $ [-0.368; 0.975]$ &                    & $ [-0.159; 0.331]$ \\
\midrule
Control mean    & $9.010$            & $9.010$            & $0.066$            & $0.066$            \\
R$^2$           & $0.471$            & $0.473$            & $0.035$            & $0.037$            \\
Num. obs.       & $765$              & $765$              & $765$              & $765$              \\
RMSE            & $2.484$            & $2.475$            & $0.916$            & $0.914$            \\
\bottomrule
\end{tabular}
\begin{tablenotes}[flushleft]
\small{\item All specifications use strata fixed effects and robust standard errors. ITT stands for Intention To Treat and LATE stands for Local Average Treatment Effect. The LATE estimate uses the treatment as an instrument to estimate the impact for those employees who click on the link to the retirement calculator.}
\end{tablenotes}
\end{threeparttable}
}
\label{table:impact}
\end{center}
\end{table}

The LATE estimate uses the treatment as an instrument for having clicked on the link to the calculator. In the email group, 27 percent of employees clicked on the link to the calculator, and in the email and phone group, 65 percent clicked on the link to the calculator.\footnote{The link within the email is unique to the recipient so the software records the interaction whether the recipient reads email on a mobile device or a computer. The calculator did not record additional usage statistics. We cannot measure how each employee engaged with the calculator.} We assume no employees in the control group opened the email since the email was not sent to the control group.

The treatment effects are small in all specifications. We cannot reject the hypothesis that the effects are zero at the conventional level of statistical significance. The table shows the 95 percent confidence intervals below each specification. From these figures, we can conclude that provision of the calculator caused a less than one percentage point increase in contribution rates.

To ensure the low average treatment effects do not conceal large impacts for certain subgroups, in Appendix \ref{sec:het}, I test for heterogeneous treatment effects using groups specified in a pre-analysis plan. I do not find noticeable differences by age, gender or Apartheid era racial classifications. I also do not find evidence for heterogeneous treatment effects using the causal forests algorithm of \citet*{wager2018forest}.

One limitation is that I do not observe saving in private retirement accounts. However, it is unlikely that employees changed retirement savings through other means than the company retirement account. Since contributions to the retirement account are not taxed and only the company retirement account can be included in the monthly payroll, the employees have a strong incentive to use the company retirement account to save for retirement. The retirement account also offers a wide range of mutual funds so it is unlikely employees would seek to open a retirement account with an external provider to access a specific investment portfolio.

I assumed employees in the control group did not use the calculator. Perhaps this assumption is false. For example, employees in the treatment group could simply forward the hyperlink to the calculator to their colleagues.\footnote{Information passed on by peers can impact behavior. In an early field experiment, employees were more likely to increase their retirement contributions if their peers in the same department attended an employee benefits fair \citep*{duflo2003information}. In a more recent experiment, employees who received information about their peer's contribution rates decreased their contribution rates \citep*{beshears2015peer}.} This possible contamination of the control group could bias our estimates of the treatment effect downwards.

Possible contamination of the control group is not a major concern for three reasons. First, the share of employees who changed their contribution rate after the experiment was lower than at the same point in previous years. Just 5.9 percent of employees changed their contribution rate at the time of the experiment in comparison to 9.5 percent the year before and 8.1 percent two years before. If contamination of the control group is masking large treatment impacts we would expect the share of employees who change their contribution rate to rise, not fall.


Second, the treatment groups did not know they were part of an experiment. Employees often receive emails sent to all employees in the company and the email sent with the calculator was very similar. There was no reason for the treatment group to assume that some of their colleagues did not receive the same email. This was also the first experiment the company conducted with the employees.

Third, we can recalculate the local average treatment effect for the treated employees.\footnote{I did not specify this approach in the pre-analysis plan.} Within this group, we know exactly who clicked on the link to the calculator and who did not. Therefore, even if the control group was contaminated, we can check that the contamination is not concealing large average treatment effects.

In the email and phone treatment group, 65 percent of employees clicked on the link to the calculator while only 27 percent clicked on the link in the email treatment. I run the same specification as in the second column of Table \ref{table:impact} on the treated employees and the point estimate is -0.012 percentage points with a 95 percent confidence interval of between -0.391 and 0.368 percentage points. Since we do not find a large effect in this group, where we know exactly who used the calculator, we can dismiss the concerns that the possible contamination of the control group is concealing large average treatment effects.

In summary, the estimates are precise enough to conclude that the calculator did not cause an economically significant change in behavior. Given that most employees currently save at the minimum contribution rate, there was scope for large increases in contribution rates and yet even the upper limit of the confidence intervals are very small.

To provide evidence that these small treatment effects are surprising, I asked researchers and students to predict the impact of providing the retirement calculator on contribution rates. Using an online survey, I explained the context of the experiment and I provided the average contribution rates in the control and each of the treatment groups in the month before the experiment. I asked the respondents to guess the average contribution rate after the provision of the retirement calculator.\footnote{Collecting predictions of the treatment impact protects from hindsight bias. See \citet{dellavigna2019predict} for more information on this approach.} The respondent with the most accurate guess received a 50 AUD Amazon voucher.

\begin{figure}
    \centering
    \includegraphics[width=\textwidth]{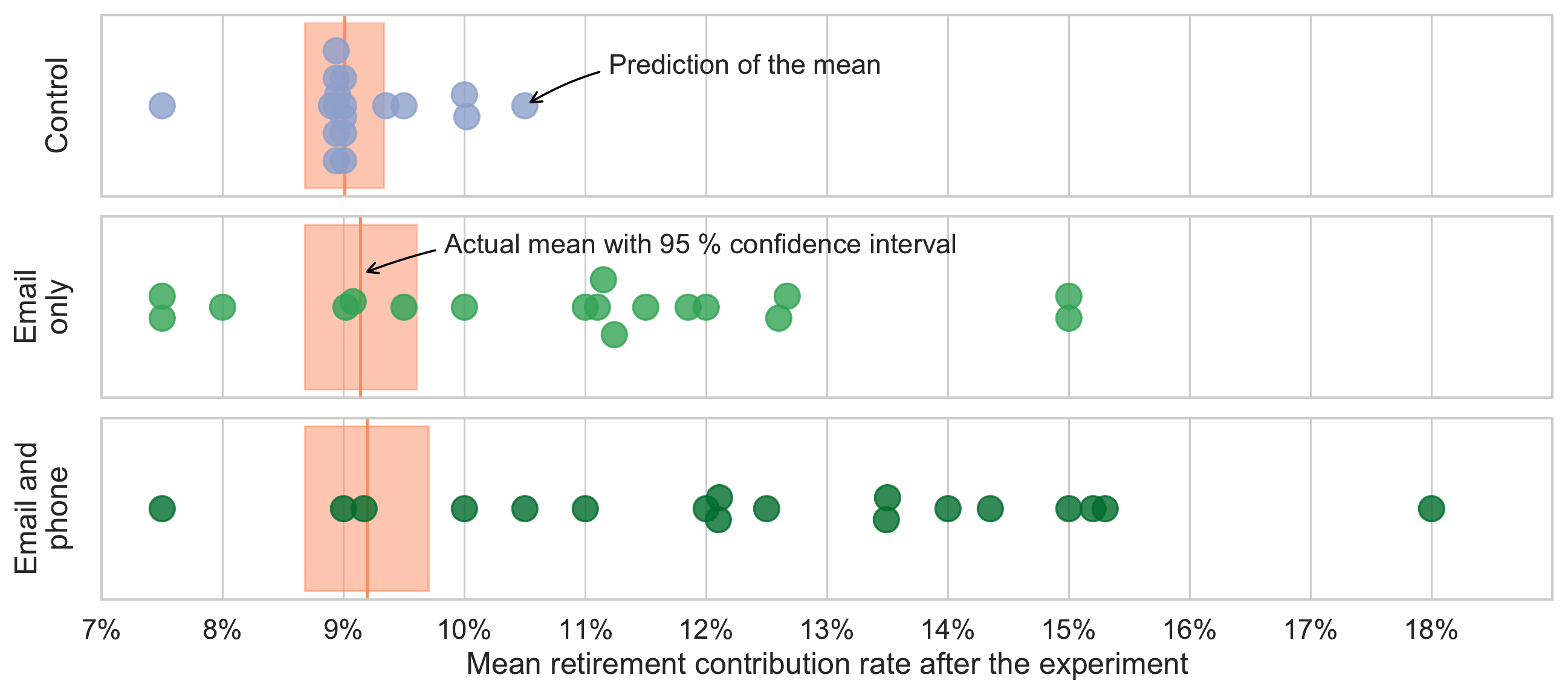}
    \caption{Predictions of the post experiment contribution rates}
    \label{fig:prediction}
\end{figure}

Figure \ref{fig:prediction} shows the predictions of 18 respondents from Monash University (8 economics students, 9 academic economists and 1 unidentified).\footnote{A recent analysis of prediction methods recommends that 15 to 30 respondents are sufficient to gain a precise estimate of the average forecast \citep*{dellavigna2020}} The true average contribution rate is shown by an orange line and the shaded region shows the 95 percent confidence interval. Most respondents guessed the control group average correctly but overestimated the average in each of the treatment groups. Only two of the respondents, predicted close to zero treatment effects in both of the treatment groups. Most respondents anticipated a larger treatment effect in the email and phone group than the email-only group.

The predictions provide evidence that the results of this experiment were unexpected. In the next section, I propose an explanation for the current equilibrium.

\section{Explaining the Current Equilibrium}

\subsection{Why Employees May Want to Save Less than the Minimum}

There are many reasons why contribution rates of 7.5 percent or lower may be optimal for employees. Examples include optimistic beliefs of investment returns, pessimistic beliefs of mortality, a strong preference for property or present-biased preferences.

The recommended contribution rate of 15 percent rests on assumptions about investment returns, retirement age and drawdown rates during retirement. Adjusting these assumptions can have a large impact on the optimal contribution rate. For example, some retirement funds have provided annual real investment returns of over 9 percent over the past 20 years---much higher than the 5 percent real return assumed by the retirement calculator. An investment return of 9 percent would decrease the recommended contribution rate from 15 percent to 5.5 percent for an employee who starts contribution at 25 and plans to retire at 65 years of age.\footnote{In Appendix \ref{appendix:formula}, I discuss the formula used in these calculations.}

The recommended drawdown rates may be too conservative for some employees. A higher drawdown rate provides more retirement income in each year of retirement and, therefore, reduces the contribution rate required to achieve a given level of retirement income. A recent survey finds that younger people underestimate survival and these beliefs cause lower levels of saving \citep*{heimer2019yolo}. South Africa has an average life expectancy of 64.1 years, which is below the normal retirement age of 65 years.\footnote{Life expectancy data sourced from \url{https://ourworldindata.org/life-expectancy}.} Incorporating this life expectancy would lead to a much lower optimal contribution rate than a life expectancy of 80 or 90 years.

An employee may have a strong preference to own property and focus on repaying a home loan. Even though retirement funds provide generous tax incentives, if a home loan takes up the bulk of the employee's budget, he may have a high propensity to consume from his remaining budget.

The experiment tackled exponential growth bias. Other behavioral biases besides exponential growth bias, such as present-based preferences \citep*{laibson1998lifecycle,goda2019predict} can generate low contribution rates.\footnote{If the employees do have present-biased preferences, the utility cost of increasing their contribution rate the following month may be too high. Even though the intervention coincided with a salary increase, the employees may be resistant to any immediate reduction in their salary. Perhaps the provision of the retirement calculator should coincide with offering a \textit{Save More Tomorrow} plan \citep*{thaler2004smart}. Instead of offering an immediate increase in contribution rates, the company could offer a delayed and automated schedule of increases for the employee to commit to in advance.}

Even without present bias, an employee's optimal contribution rate may vary over their lifetime and move below the minimum of 7.5 percent at some point. The seminal theories of life-cycle consumption show that an employee with rising income may wish to only start saving for retirement many years after they have started working \citep*{modigliani1954utility,friedman1957permanent}. Data on lifetime savings behavior indicates that people do optimize retirement saving and optimal savings rates vary over people's lifetimes \citep*{scholz2006savings}.

Any combination of the above reasons could cause optimal contribution rates to be below the minimum of 7.5 percent. In the next section, I explore why the employer sets a binding minimum.

\subsection{Why the Employer Sets a Binding Minimum}

Assume that some employees have an optimal retirement contribution rate of below 7.5 percent. Why does the employer set a binding minimum? I introduce a model of asymmetric information to explain the employer's choice.

Assume there are two types of employees, careful and yolo (an acronym for ``you only live once"). The careful employee will build up retirement savings regardless of the employer's policies. The yolo employee likes to live in the moment and would rather consume than focus on saving for retirement.\footnote{A survey in urban areas of South Africa found that one in three employees have no formal retirement saving \citep*{oldmutual2018}.}

In order to attract talented job applicants and gain customer loyalty, the employer values the company's reputation \citep*{eccles2007reputation}. The satisfaction of current and past employees will be a strong determinant of the company's reputation \citep*{cravens2006employees}. If the employer is faced with employees with present-biased preferences or self-control problems, such as the yolo types, the employer may impose restrictions to preserve the company's reputation.\footnote{A global survey of company executives reports that executives ``rate reputation risk as more important or much more important than other strategic risks their companies are facing. In addition, 88 percent say their companies are explicitly focusing on managing reputation risk." \citep*{deloitte2014reputation} } \citet{laibson2018private} calls these restrictions \textit{private paternalism}.

Retirement savings evolve over long time horizons so typically the state has a stronger incentive to enforce paternalistic retirement saving policies than the employer. However, in a setting where the state does have the capacity to provide retirement benefits, the employer may take on these paternalistic responsibilities to preserve the company's reputation. An employee who retires from the company without sufficient retirement savings may tarnish the company's reputation.\footnote{As \citet{bodie1989pensions} explains, ``While it is certainly true that employers and employees have conflicting economic interests, in many respects their interests coincide. Employers who acquire a reputation for taking care of the retirement needs of their employees may find it easier to recruit and retain higher quality employees in the future."}

We can represent the interaction between employer and the employee by the following game tree.
\begin{center}
    \includegraphics[width=0.75\textwidth]{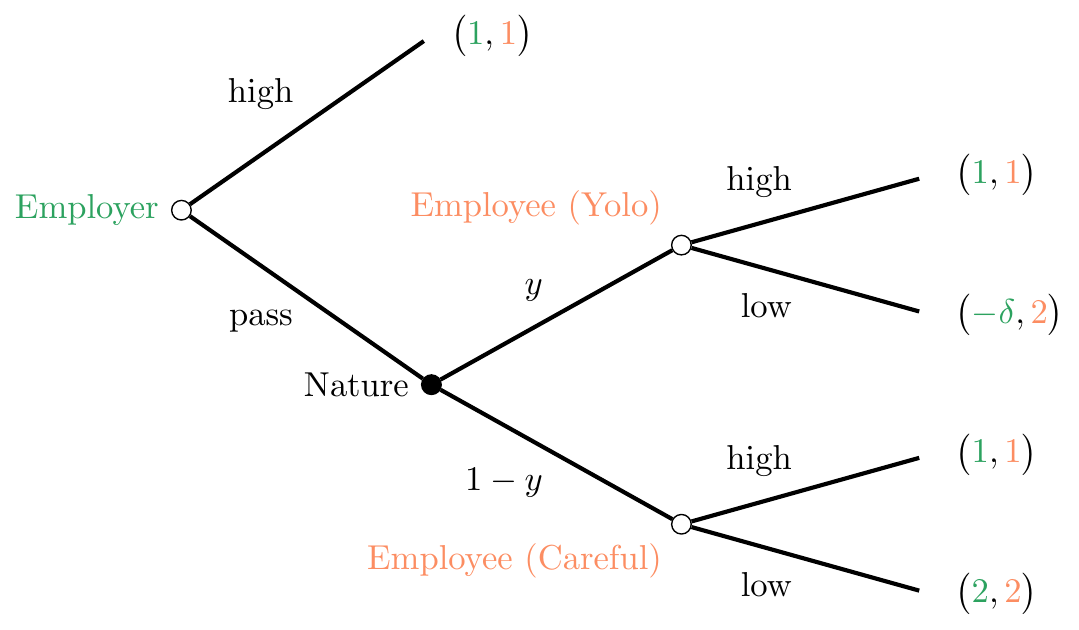}
\end{center}

The employer starts by deciding if a high minimum should be imposed or if the decision on the minimum is passed on to the employee. Think of this choice as a contract negotiation. The employer can choose to set a high minimum retirement contribution rate or give the employee the flexibility to choose the minimum (low or high). If the employer sets a high minimum, the game ends and both the employer and employee receive a payoff of 1.

If the employer allows flexibility, the employer will face a yolo employee with probability $y$ and a careful employee with probability $1-y$. Both types of employees prefer a low minimum to a high minimum, but for different reasons. The careful employee contributes above the high minimum but he likes the additional flexibility of the low minimum---in case he is hit with a temporary income shock. The yolo employee prefers the low minimum because he wants as much disposable income as possible.

The employer prefers a low minimum if faced with a careful employee and a high minimum if faced with a yolo employee. If the employer allows a low minimum and faces a yolo employee, the employer receives a payoff of $-\delta$, where $\delta$ represents the magnitude of the damage the yolo employee inflicts to the company's reputation.

The subgame perfect equilibrium of this game will depend on $\delta$ and $y$. The higher $\delta$ the more damage a yolo employee could cause and the employer is more likely to set a high minimum. The higher $y$ the more likely the employer will face a yolo employee and the employer is more likely to set a high minimum. I summarize the result in the following proposition.
\begin{proposition}
The subgame perfect equilibrium of the game between the employer and employee is
\vspace{-0.8cm}
\begin{align*}
    \{\text{high} \ ; (\text{low}, \text{low})\} & \text{ \ if \ } \delta + 2 > \dfrac{1}{y} \\
    \{\text{pass} \ ; (\text{low}, \text{low})\} & \text{ \ if \ } \delta + 2 < \dfrac{1}{y} \\
     \{q * \text{high} \otimes (1-q)* \text{pass} \ ; (\text{low}, \text{low})\} & \text{ \ if \ } \delta + 2 = \dfrac{1}{y} \text{ \ for \ } q \in [0,1] .
\end{align*}
where $\otimes$ represents a mixed strategy.
\end{proposition}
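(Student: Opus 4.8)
The plan is to solve this finite extensive-form game by straightforward backward induction, which pins down the subgame-perfect equilibria directly. First I would analyze the proper subgame that begins after the employer chooses ``pass'' and Nature has drawn the employee's type. In that subgame each type of employee merely chooses between ``low'' and ``high''. Since the text stipulates that both the careful and the yolo employee \emph{strictly} prefer a low minimum to a high one (the payoff from ``low'' exceeds the payoff of $1$ obtained under ``high''), ``low'' is a strictly dominant action for each type. Hence in every subgame-perfect equilibrium the employee plays $(\text{low},\text{low})$, and this component of the profile is unique.

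Next I would fold this continuation back into the employer's opening move. Given that the employee responds with ``low'' regardless of type, the employer's payoff from ``high'' is $1$, whereas the payoff from ``pass'' is the expectation over Nature's draw: with probability $y$ the employer meets a yolo employee facing a low minimum and earns $-\delta$, and with probability $1-y$ the employer meets a careful employee facing a low minimum and earns $2$ (reading the terminal payoffs off the described game tree). So ``pass'' yields $-\delta y + 2(1-y) = 2 - 2y - \delta y$, and the employer's problem reduces to comparing $1$ against $2 - 2y - \delta y$.

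Then I would carry out the elementary case split. Rearranging $1 > 2 - 2y - \delta y$ gives $y(\delta+2) > 1$, i.e.\ $\delta + 2 > 1/y$ (legitimate since $y \in (0,1)$); here ``high'' is the unique best response, so the SPE is $\{\text{high}\,;(\text{low},\text{low})\}$. The reversed strict inequality $\delta + 2 < 1/y$ makes ``pass'' the unique best response, giving $\{\text{pass}\,;(\text{low},\text{low})\}$. On the knife-edge $\delta + 2 = 1/y$ the employer is indifferent between the two pure actions, so every mixture $q\,\text{high} \otimes (1-q)\,\text{pass}$ with $q \in [0,1]$ is a best response; paired with the forced continuation $(\text{low},\text{low})$ this yields the stated one-parameter family, and since the employee's play is uniquely determined throughout, these three cases exhaust all subgame-perfect equilibria. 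Assembling them gives exactly the proposition.

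There is no substantive obstacle here beyond bookkeeping. The only points requiring care are (i) extracting the correct terminal payoffs from the verbal description of the tree, (ii) noting that dividing the employer's indifference condition by $y$ is valid because $y$ is a probability in $(0,1)$, and (iii) recognizing that the indifference case genuinely produces a continuum of equilibria rather than a single one, which is why the proposition records the parameter $q$.
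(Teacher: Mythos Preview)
Your proposal is correct and follows essentially the same route as the paper: identify that $(\text{low},\text{low})$ is dominant for the employee, compute the employer's expected payoff from ``pass'' as $-\delta y + 2(1-y)$, compare it to the sure payoff of $1$ from ``high'', and do the case split. Your write-up is in fact slightly more careful than the paper's, which contains a harmless sign slip in the intermediate inequality and is terser about the knife-edge case.
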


\begin{proof}
For the employer, the payoff of the high action is $1$ and the expected payoff of the pass action is $y(-\delta) + (1-y)2$. Therefore the employer prefers the high action if $ y(-\delta) + (1-y)2 > 1 $, which can be simplified to $\delta + 2 > \dfrac{1}{y}$. The employee has a dominant strategy to choose the low action. Therefore, all of the equilibria listed in the proposition are Nash Equilibria. Besides the full game itself, the only subgame stems from the Nature node. In this subgame, the employee has a dominant strategy to play (Low, Low). Therefore, all of the Nash equilibria are subgame perfect.
\end{proof}

The probability of facing a yolo employee, $y$, and the magnitude of the damage this type of employee could cause, $\delta$, may depend on the characteristics of the employer. For example, an employer with a strong brand may incur greater damage from a yolo employee than an employer with an unknown brand. The damage may also depend on the general retirement system. Due to the relatively small old age grants in developing countries, a yolo employee is more likely to retire without sufficient savings.

The model assumes that the employee's retirement saving decisions impact the employer's utility through reputation concerns. Other channels could generate this impact. The employer's managers could have paternalistic preferences and derive utility from their own aspirations for retirement \citep*{ambuehl2021motivates}. Government rules, such as the laws in the United States preventing favoritism of certain employees, could also drive this impact. I leave further investigation of employer preferences to future research.

\section{Conclusion}

This setting was ideal for detecting large responses to a retirement calculator showing income projections. Most employees currently contribute at the minimum rate of 7.5 percent, which is far below the recommended rate of 15 percent. The company provided the calculator one week before employees were notified of their annual salary increase and performance bonus. Any changes in contribution rate coincided with the first month of the salary increase so employees could raise their contribution rates without reducing their nominal salary.

Despite the potential for large responses, the retirement calculator caused a negligible change in contribution rates. This experiment adds to a growing body of evidence that financial education initiatives do little to change financial behavior \citep*{fernandes2014meta,miller2015meta,kaiser2017meta}.

Many reasons may explain why many employees may wish to contribute less than 7.5 percent of their salary to a retirement fund. I discussed beliefs about investment returns, beliefs about mortality, preferences for property and present-biased preferences.

I use a model of asymmetric information to explain why the employer may set a binding minimum contribution rate. The employer worries that some of the employees may not save enough for retirement and then damage the reputation of the company once they retire. Since the employer cannot observe the employee's attitude towards retirement saving, the employer may prefer to force all employees to save a minimum amount.

{\small
\bibliography{references}
}
\clearpage

\appendix

\begin{center}
    \LARGE{\textbf{Online Appendix}}
\end{center}

\thispagestyle{empty}

\setcounter{table}{0}
\renewcommand{\thetable}{A\arabic{table}}
\section{Descriptive Statistics}

\subsection{Employer-Provided Retirement Funds in South Africa}

In Table \ref{table:employer_fund}, using survey data from the National Income Dynamic Study (NIDS), I calculate the share of employed South Africans who have retirement contributions deducted from their salary by their employer. As expected, the share of employees with employer-provided retirement funds increases with salary. Only 14 percent of employees earning less than R 3 600 (250 USD) have retirement fund deductions whereas 66 percent of employees earning more than R48 000 per month (3 330 USD) have retirement fund deductions.

\begin{table}[ht]
\caption{Retirement fund offered by employer}
\label{table:employer_fund}
\small
\begin{threeparttable}
\begin{tabulary}{\textwidth}{cCCC}
\toprule
Monthly wage	&	Share of employed population	&	Retirement fund deducted from salary 	&	95\% confidence interval	\\
\midrule
$<$ R 3 600	&	26.91	&	14.17	&	8.89 - 19.47	\\
R 3 600 - R 8 000	&	32.85	&	29.37	&	25.48 - 33.26	\\
R 8 000 - R 15 000	&	17.53	&	46.84	&	42.92 - 50.77	\\
R 15 000 - R 22 000	&	8.10	&	60.25	&	53.89 - 66.61	\\
R 22 000 - R 48 000	&	9.55	&	63.62	&	58.23 - 69.01	\\
$>$ R 48 000	&	5.05	&	66.59	&	57.47 - 75.71 \\
\bottomrule
\end{tabulary}
\begin{tablenotes}
\item Source: Calculated from the National Income Dynamic Study (NIDS) Wave 5, which was conducted in 2017. The survey is nationally representative so the share shown in the table is a sample estimate of the share in the population. 
\end{tablenotes}
\end{threeparttable}
\end{table}

The experiment is conducted with employees of an asset management company. Although we don't have individual level salary data for each employee, industry standards suggest that most employees at the company will earn more than R 15 000 per month, corresponding to the bottom three rows of Table \ref{table:employer_fund}.

\subsection{Employees included in the experiment}

Table \ref{table:desc_sample} provides additional descriptive statistics of the 775 employees included in the experiment. 

\begin{table}[ht]
\centering
\caption{Descriptive statistics of employees included in the experiment}
\label{table:desc_sample}
\small
\begin{tabulary}{\textwidth}{lCCcc}
\toprule
	&	Mean	&	Standard deviation	&	Minimum	&	Maximum	\\
\midrule
Age (years) 	&	33.71	&	8.05	&	21	&	59	\\
Tenure (years) 	&	5.16	&	4.77	&	0	&	34	\\
Male 	&	0.47	&	0.5	&	0	&	1	\\
Previously disadvantaged	&	0.71	&	0.45	&	0	&	1	\\
\bottomrule
\end{tabulary}
\end{table}

\section{Email Treatment}\label{appendix:email}

The calculator was sent to the treated employees with the following email.

\

{\fontfamily{phv}\selectfont
\noindent \textbf{Are you on track for 75\%?}

\

\noindent Dear $<$Employee$>$

\

\noindent We all want to retire with enough. While there is no easy answer to what ‘enough’ is, a well-researched rule of thumb is that \textbf{a retirement income equal to 75\% of your final salary} will allow you to live comfortably in retirement. This figure accounts for the adjustments many people make as they grow older, for example, lower housing and higher medical costs.

\

\noindent Use our retirement income calculator to see how much your monthly salary income in retirement could be and follow the instructions on the webpage to make a change.

\

\noindent \underline{Use the calculator to see if you are on track.}
}


\setcounter{table}{0}
\renewcommand{\thetable}{C\arabic{table}}
\section{Checking Balance between Treatment and Control Groups}
\label{appendix:balance}


Table \ref{table:balance} provides summary statistics to check the balance between the treatment groups and the control groups. Since I stratified by age, gender and Apartheid-era classification, these characteristics will be balanced by design. The remaining measures which are likely to be correlated with the outcomes of interest include the pre-experiment contribution rate and tenure.

\begin{table}[ht]
\centering
\caption{Checking balance on observable characteristics}
\label{table:balance}
\small
\begin{tabular}{F{6cm} G{1.8cm} G{1.8cm} G{1.8cm}}
\toprule
	&	Control	&	Email	&	Email and phone	\\
\midrule
Observations	&	388	&	193	&	194	\\
\addlinespace
Pre-experiment contribution rate	&		&		&		\\
\ \    Mean	&	8.944	&	8.963	&	9.071	\\
\ \    Standard deviation	&	3.184	&	3.165	&	3.626	\\
\addlinespace
Tenure (years)	&		&		&		\\
\ \    Mean	&	5.838	&	5.566	&	5.638	\\
 \ \   Standard deviation	&	4.669	&	4.881	&	4.734	\\
\bottomrule
\end{tabular}
\end{table}

Since I stratified by whether the employee was saving at the minimum rate, the pre-experiment contribution rate could only differ between the treatment groups and the control groups if most of the employees contributing close to the maximum contribution rate happened to be randomized into one of the groups. This was not the case and the mean pre-experiment contribution rates are similar across groups. An F-test fails to reject the hypothesis of equal means across groups (p-value 0.88) and a t-test with the treatment groups combined also fails to reject the hypothesis of equal means (p-value 0.64).

For tenure, the means are also very similar across groups, possibly aided by stratifying by age categories. Again, an F-test fails to reject the hypothesis of equal means across groups (p-value 0.62) and a t-test with the treatment groups combined also fails to reject the hypothesis of equal means (p-value 0.33).

\setcounter{table}{0}
\renewcommand{\thetable}{D\arabic{table}}
\section{Contribution Rate Formula}\label{appendix:formula}

This appendix shows how to calculate the required contribution rate for a goal retirement income and illustrates why retirement contribution rates of 15 percent and higher are often recommended.

To calculate the required contribution rate for a goal retirement income, we must complete an annuity calculation. (The retirement calculator also completed the same type of calculation.) Let $R$ be the total accumulated retirement savings at retirement, $s$ the annual salary, $c$ the contribution rate, $r$ the investment return and $n$ the number of years until retirement. For simplicity, assume the employee has not started saving yet and the inflation rate is zero.
\begin{align*}
    R  &= sc \dfrac{(1+r)^n-1}{r}  
\end{align*}
Let $d$ be the sustainable drawdown rate (the percentage of the employee's retirement saving he withdraws each year during retirement) and $p$ the goal replacement rate (the ratio of retirement income to salary).
\begin{align*}
    ps = dR \\
    R = \dfrac{ps}{d}
\end{align*}
Substituting for $R$ and solving for $c$,
\begin{align*}
    \dfrac{ps}{d} &= sc \dfrac{(1+r)^n-1}{r}  \\
    c &= \dfrac{pr}{d[(1+r)^n-1]}
\end{align*}
As the formula highlights, the required contribution rate $c$ is sensitive to the assumptions of investment returns $r$ and the number of years of contributing $n$.

In Table \ref{table:contr_rates}, I show the required contribution rates for a range of starting and retirement ages. I assume a drawdown rate $d$ of 4 percent, a replacement rate $p$ of 75 percent and an investment return $r$ of 5 percent (which corresponds to the upper limit of the default investment return used by the calculator). The required contribution rates shown in the table highlight that unless an employee plans to save for more than 40 years or take a more optimistic view of investment returns, the required contribution rate is at least 15 percent. 

\begin{table}[ht]
    \centering
    \begin{tabular}{ccccccc}
     & & \multicolumn{5}{c}{Retirement age} \\
&	&   55	    &	60	    &	65	    &	70	    &	75	\\
\multirow{4}{*}{\rotatebox[origin=c]{90}{Start age}}
& 25	&	28.2\%	&	20.8\%	&	15.5\%	&	11.7\%	&	9.0\%	\\
& 30	&	39.3\%	&	28.2\%	&	20.8\%	&	15.5\%	&	11.7\%	\\
& 35	&  56.7\%	&	39.3\%	&	28.2\%	&	20.8\%	&	15.5\%	\\
& 40	& 	86.9\%	&	56.7\%	&	39.3\%	&	28.2\%	&	20.8\%
    \end{tabular}
    \caption{Required contribution rates to reach a replacement rate of 75 percent}
    \label{table:contr_rates}
\end{table}


\section{Heterogeneous Treatment Effects}
\label{sec:het}
\setcounter{figure}{0}
\renewcommand{\thefigure}{E\arabic{figure}}
\setcounter{table}{0}
\renewcommand{\thetable}{E\arabic{table}}

\subsection{Methodology}

Certain groups of employees may be more responsive to the calculator than others. I stratified the sample to detect heterogeneous effects. I use the following regression specification to detect heterogeneous treatment effects:
\begin{align}
Rate_i = \beta_0 + \beta_1 Treatment_i + \beta_2 Group_i + \beta_3 Treatment_i * Group_i  + \eta_i
\end{align}
$Treatment_i$ is the same treatment variable defined above and $Group_i$ is an indicator for group membership. 

I registered analysis of groups along three dimensions: gender, age and Apartheid-era racial categories. Since women live longer than men on average, the calculator recommends a higher savings rate for women. Older employees may respond differently to younger employees as retirement is a more immediate concern for older employees. Finally, employees who were previously disadvantaged by the Apartheid system may live with greater pressure to support family and have less freedom to save for retirement.

The registration of groups in the pre-analysis plan prevents the researcher from forming groups around outliers. However, this approach may lead to important groups being overlooked, especially if the groups are defined by combinations of observable characteristics. For example, men over the age of 35 may have much larger treatment effects than women over the age of 35. If we study heterogeneous treatment effects by age and gender separately, we may not detect this difference. 

To solve the problem of detecting possibly complex dimensions of heterogeneity while still retaining the transparency of a pre-analysis plan, I included \possessivecite{wager2018forest} causal forest method in the pre-analysis plan. Instead of trying to specify all possible dimensions of heterogeneity in advance, causal forests use a machine-learning algorithm to investigate the presence of heterogeneous effects.

Causal forests average over many causal trees. A causal tree represents a grouping of the observations. Figure \ref{fig:tree} provides an example of a causal tree. We start at the top of the tree with all the treatment and control observations. In this example, we start by splitting the observations on age. Employees younger than 35 are split into a group. If the group is not split further, we reach a terminal group, called a leaf. For employees older than 35, we split again by gender. This tree has three leaves, which gives us three groups. The tree is called \emph{causal} because we can estimate the conditional average treatment effect within each leaf. For example, in the leaf of employees younger than 35, we can calculate the average treatment effect using only the observations which fall within that leaf.

\begin{figure}
    \centering
    \includegraphics[height=8cm]{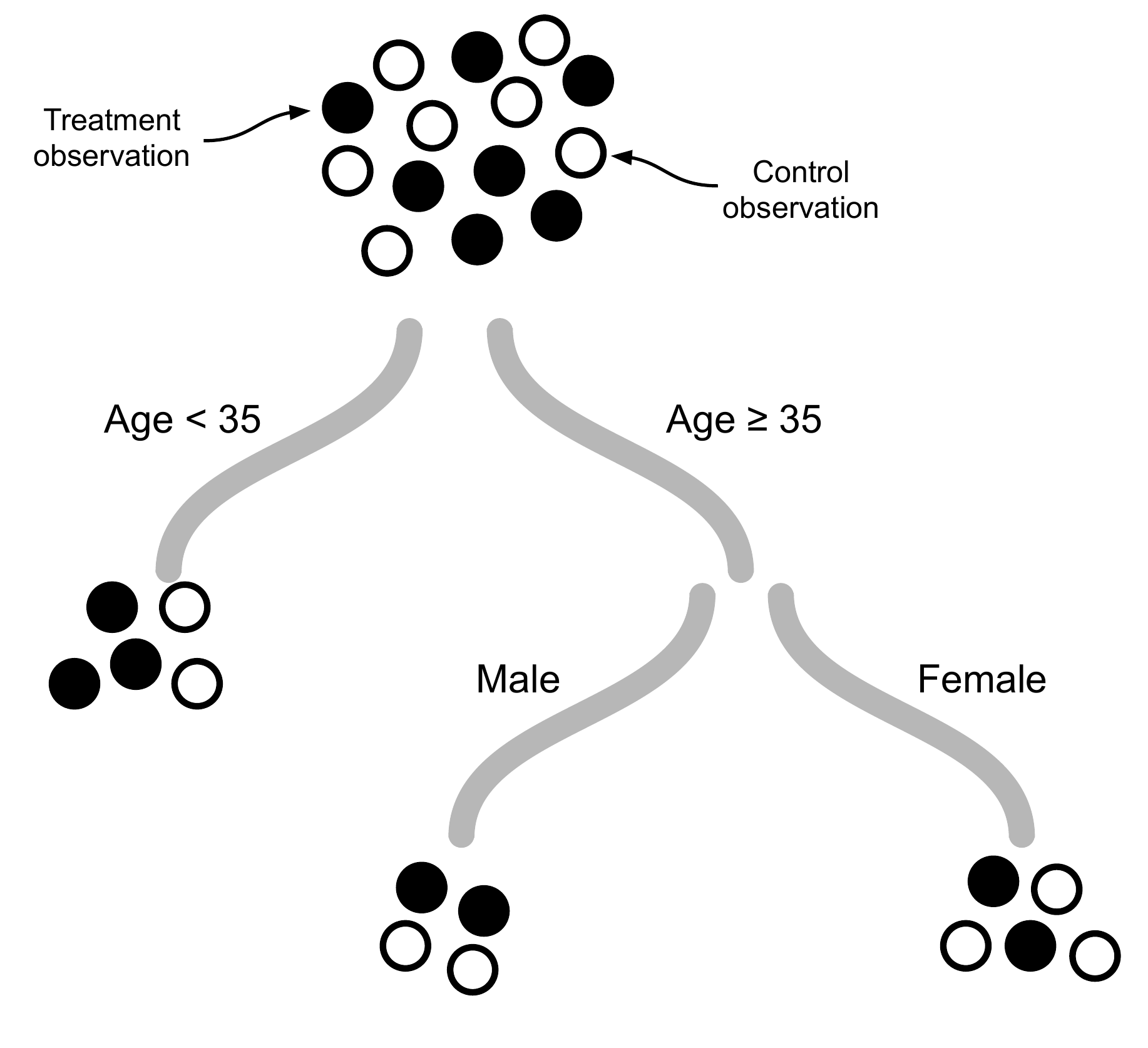}
    \caption{Example of a causal tree}
    \label{fig:tree}
\end{figure}

The algorithm forms a causal tree by choosing the groups which maximize the difference in treatment effects between groups. For each split, the algorithm calculates the estimated treatment effect in each leaf and chooses the split which maximizes the variance in estimated treatment effects across leaves, but with a penalty for the within leaf variance in treatment effects. The algorithm stops splitting if the variance of treatment effects across leaves cannot be increased or if there are too few treatment or control observations within the leaf.

Many causal trees are averaged to form the causal forest. To prevent correlation between trees, the algorithm uses a random subsample of observations and a random subsample of observable characteristics to form each tree. This process of repeated subsampling of both the observations and observables is the essence of the random forests algorithm, a popular machine-learning algorithm on which causal forests is based.

Causal forests have an ``honesty property" to prevent bias from outliers. The algorithm decides how to split a tree with one subsample, but estimates the conditional average treatment effect with a held-out subsample. Since causal forests average over many trees, all the observations are eventually used in both the splitting subsample and the held-out estimation subsample. 

I use the \texttt{grf} R package to estimate the causal forest.\footnote{Visit \href{https://github.com/grf-labs/grf}{github.com/grf-labs/grf} for details on how to use the \texttt{grf} package.} I build 2 000 trees. For each tree, the algorithm takes a 25 percent subsample to build the tree and a 25 percent subsample to estimate the conditional treatment effects in each leaf. Leaves with less than 5 treatment or control observations from the estimation subsample are not split further. I include age categories, gender, an indicator for classified as ``white" during the Apartheid regime and the pre-experiment contribution rate as inputs for the regression trees.

The final output of the causal forest provides a model of the conditional average treatment effects. We can predict each individual's conditional average treatment effect using the model. An individual is passed down every tree in the forest based on his or her observable characteristics. For each tree, the individual will reach a leaf, or terminal group. The individual's conditional average treatment effect estimate is the average over the estimates in each leaf he or she reaches.

\subsection{Results}

I investigate the presence of heterogeneous impacts by gender, Apartheid-era racial classifications and age. I registered these dimensions of heterogeneity in the pre-analysis plan. The results are shown in separate tables below.

\begin{table}[ht!]
\caption{Effect by Gender}
\begin{center}
\scalebox{0.85}{
\begin{threeparttable}
\begin{tabular}{l c c}
\toprule
 & Contribution rate & Change in contribution rate \\
\midrule
Email only             & $-0.204$           & $-0.054$           \\
                       & $ [-0.851; 0.442]$ & $ [-0.217; 0.110]$ \\
Phone and email        & $0.184$            & $-0.007$           \\
                       & $ [-0.669; 1.036]$ & $ [-0.206; 0.192]$ \\
Male                   & $-0.006$           & $-0.144$           \\
                       & $ [-0.662; 0.651]$ & $ [-0.352; 0.065]$ \\
Email only * Male      & $0.709$            & $0.258$            \\
                       & $ [-0.441; 1.859]$ & $ [-0.030; 0.545]$ \\
Phone and email * Male & $0.002$            & $0.126$            \\
                       & $ [-1.210; 1.214]$ & $ [-0.196; 0.449]$ \\
\midrule
Control mean           & $9.010$            & $0.066$            \\
R$^2$                  & $0.003$            & $0.005$            \\
Num. obs.              & $765$              & $765$              \\
RMSE                   & $3.344$            & $0.912$            \\
\bottomrule
\end{tabular}
\begin{tablenotes}[flushleft]
\small{\item All specifications use robust standard errors. ATE stands for Average Treatment Effect.}
\end{tablenotes}
\end{threeparttable}
}
\label{table:het_gender}
\end{center}
\end{table}

\begin{table}[ht!]
\caption{Effect by Apartheid-era categories}
\begin{center}
\scalebox{0.85}{
\begin{threeparttable}
\begin{tabular}{l c c}
\toprule
 & Contribution rate & Change in contribution rate \\
\midrule
Email only                   & $-0.002$           & $0.038$            \\
                             & $ [-0.580; 0.576]$ & $ [-0.111; 0.186]$ \\
Phone and email              & $0.014$            & $0.017$            \\
                             & $ [-0.567; 0.595]$ & $ [-0.162; 0.197]$ \\
Privileged                   & $1.272^{*}$        & $-0.121$           \\
                             & $ [ 0.520; 2.024]$ & $ [-0.385; 0.143]$ \\
Email only * Privileged      & $0.437$            & $0.110$            \\
                             & $ [-0.970; 1.844]$ & $ [-0.250; 0.470]$ \\
Phone and email * Privileged & $0.531$            & $0.126$            \\
                             & $ [-1.037; 2.100]$ & $ [-0.252; 0.504]$ \\
\midrule
Control mean                 & $9.010$            & $0.066$            \\
R$^2$                        & $0.044$            & $0.003$            \\
Num. obs.                    & $765$              & $765$              \\
RMSE                         & $3.275$            & $0.913$            \\
\bottomrule
\end{tabular}
\begin{tablenotes}[flushleft]
\small{\item All specifications use robust standard errors. ATE stands for Average Treatment Effect.}
\end{tablenotes}
\end{threeparttable}
}
\label{table:het_aparthied}
\end{center}
\end{table}

\begin{table}[ht]
\caption{Effect by Age Categories}
\begin{center}
\scalebox{0.85}{
\begin{threeparttable}
\begin{tabular}{l c c}
\toprule
 & Contribution rate & Change in contribution rate \\
\midrule
Email only              & $-0.144$           & $-0.061$           \\
                        & $ [-1.033; 0.744]$ & $ [-0.466; 0.344]$ \\
Phone and email         & $-0.583$           & $-0.242$           \\
                        & $ [-1.290; 0.125]$ & $ [-0.650; 0.167]$ \\
27-32                   & $-0.403$           & $-0.190$           \\
                        & $ [-1.131; 0.325]$ & $ [-0.561; 0.182]$ \\
32-38                   & $0.280$            & $-0.135$           \\
                        & $ [-0.716; 1.275]$ & $ [-0.492; 0.221]$ \\
38+                     & $0.225$            & $-0.208$           \\
                        & $ [-0.696; 1.146]$ & $ [-0.563; 0.147]$ \\
Email only * 27-32      & $0.776$            & $0.241$            \\
                        & $ [-0.700; 2.252]$ & $ [-0.231; 0.713]$ \\
Email only * 32-38      & $0.188$            & $0.153$            \\
                        & $ [-1.320; 1.696]$ & $ [-0.311; 0.617]$ \\
Email only * 38+        & $0.199$            & $0.147$            \\
                        & $ [-1.359; 1.757]$ & $ [-0.289; 0.582]$ \\
Phone and email * 27-32 & $1.399$        & $0.310$            \\
                        & $ [ 0.034; 2.763]$ & $ [-0.195; 0.816]$ \\
Phone and email * 32-38 & $0.095$            & $0.409$            \\
                        & $ [-1.210; 1.399]$ & $ [-0.063; 0.881]$ \\
Phone and email * 38+   & $1.658$            & $0.505$        \\
                        & $ [-0.169; 3.486]$ & $ [ 0.016; 0.994]$ \\
\midrule
Control mean            & $9.010$            & $0.066$            \\
R$^2$                   & $0.016$            & $0.010$            \\
Num. obs.               & $765$              & $765$              \\
RMSE                    & $3.336$            & $0.914$            \\
\bottomrule
\end{tabular}
\begin{tablenotes}[flushleft]
\small{\item All specifications use robust standard errors. ATE stands for Average Treatment Effect.}
\end{tablenotes}
\end{threeparttable}
}
\label{table:het_age}
\end{center}
\end{table}

I cannot pre-specify all possible dimensions of heterogeneity and certain subgroups with large treatment effects may go undetected, especially if the subgroups are defined by complex interactions of the measured characteristics. To tackle this problem, I use a machine learning algorithm called causal forests developed by \citet{wager2018forest}. The algorithm provides individual-level estimates of conditional treatment effects, akin to estimates provided by a nearest neighbor matching approach. 

I plot a histogram of the causal forest estimates of the conditional treatment effects for the 765 employees in Figure \ref{fig:forest_hist}. I combine the email and email and phone treatments into a single binary treatment indicator for having received an email about the calculator. The conditional treatment effects range between -0.3 and 0.3. Since the conditional treatment effects are clustered in a tight range around the average treatment effect, we can conclude that the treatment effect is similar across subgroups.

\begin{figure}[ht]
\centering
\includegraphics[width=\textwidth]{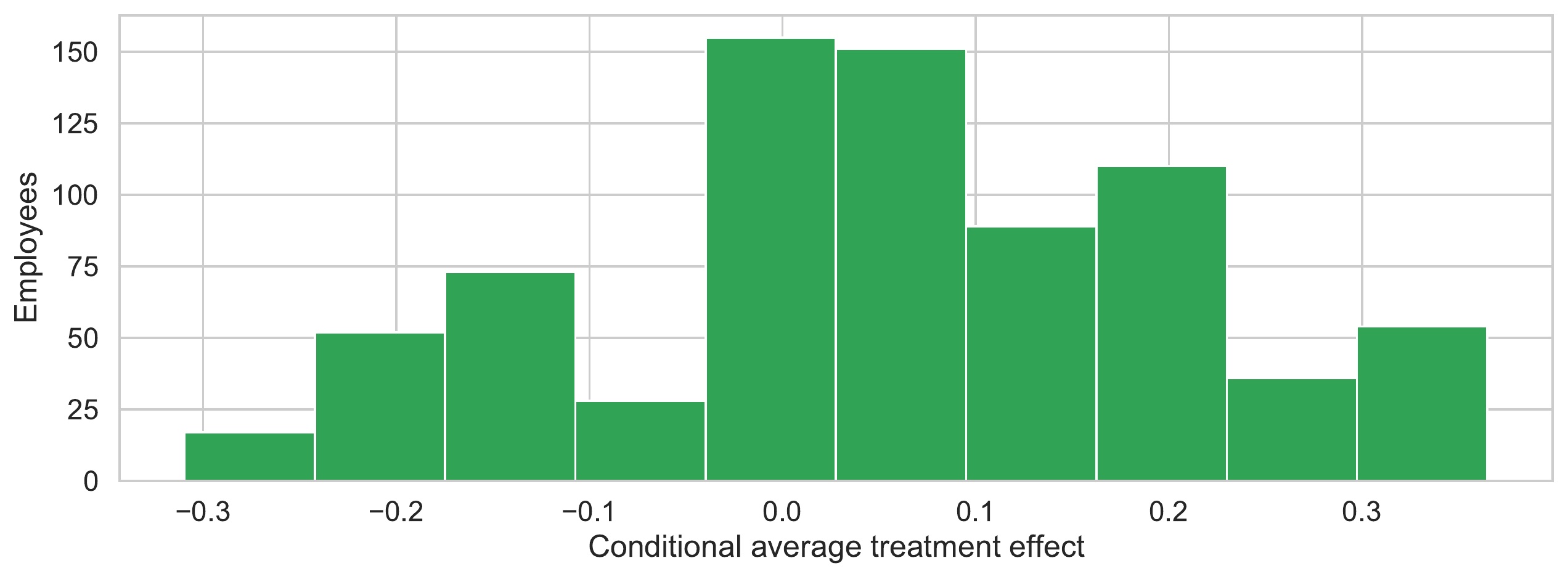}
\caption{Histogram of Conditional Treatment Effects}
\label{fig:forest_hist}
\end{figure}




\section{Examples of Company Retirement Policies}
\label{sec:example_policies}
\setcounter{figure}{0}
\renewcommand{\thefigure}{F\arabic{figure}}
\setcounter{table}{0}
\renewcommand{\thetable}{F\arabic{table}}

The financial services company in this study set a mandatory minimum contribution rate of 7.5 percent towards the retirement account. In this section, I provide examples to show that this policy is not unique to this company.

I collected a list of 2000 companies that provide goods or services to South African customers from the HelloPeter complaints website. All companies received at least 10 reviews on the website. I then collected contact details for the companies with the assistance of workers on Amazon Mechanical Turk. I sent the following email to each company:

\ 

{\fontfamily{phv}\selectfont

\noindent \textbf{Survey on corporate retirement policies}

\noindent I am a research fellow at UNSW Sydney, and I am studying corporate retirement savings policies. Could you please answer the following brief questions about [Company Name]?

\

\noindent 1. Does [Company Name] offer permanent employees a retirement savings policy (such as a pension fund, provident fund or retirement annuity)?

\noindent If yes:

\noindent 2.1. Is it mandatory for the employee to join the retirement policy?

\noindent 2.2. What is the minimum percentage of salary the employee can contribute to the retirement policy?

\ 
}

In Table \ref{table:company_policies}, I provide the details of the companies who responded to the survey. Half of the companies who responded do impose a minimum contribution rate on employees. The minimum varies from 5.5 to 15.5 percent of the employee's salary. However, there are also examples of companies that do not impose minimums. 

\begin{table}[ht]
\caption{Company Retirement Policies}
\begin{center}
\scalebox{0.75}{
\begin{threeparttable}
\begin{tabular}{l I I I}
\toprule
Industry	&	Employees	&	Mandatory retirement savings?  	& Minimum contribution		\\
\midrule
Telecommunications	&	$>$ 5 000	&	Yes	&	12	\\
Financial Services	&	$>$ 5 000	&	Yes	&	9.5	\\
Financial Services	&	$>$ 5 000	&	Yes	&	10	\\
Financial Services	&	$>$ 5 000	&	Yes	&	14	\\
Clothing Supplier	&	760	&	Yes	&	10.5	\\
Telecommunications	&	700	&	Yes	&	7	\\
Real Estate	&	320	&	Yes	&	7	\\
Security	&	230	&	Yes	&	6	\\
Financial Services	&	210	&	No	&		\\
Printing	&	170	&	No	&	-	\\
Education	&	140	&	No	&	-	\\
Automotive	&	140	&	Yes	&	10	\\
Entertainment	&	130	&	No	&		\\
Retail	&	90	&	No	&	-	\\
Consumer Electronics	&	80	&	No	&	15	\\
Accommodation	&	50	&	No	&	-	\\
Logistics	&	40	&	Yes	&	7.5	\\
Real Estate	&	40	&	No	&	-	\\
Real Estate	&	40	&	No	&	-	\\
Real Estate	&	30	&	No	&	-	\\
Real Estate	&	20	&	Yes	&	5	\\
Consumer Services	&	20	&	Yes	&	10	\\
Restaurant	&	10	&	Yes	&	5.5	\\
Financial Services	&	10	&	No	&	-	\\
Education	&	10	&	No	&	-	\\
Automotive	&	10	&	Yes	&	15.5	\\
Retail	&	10	&	No	&	- \\
\bottomrule
\end{tabular}
\begin{tablenotes}[flushleft]
\small{\item The minimum contribution is measured as a percentage of the employee's gross salary. Company names are omitted to preserve the company's privacy.}
\end{tablenotes}
\end{threeparttable}
}
\label{table:company_policies}
\end{center}
\end{table}


\end{document}